\newtheorem{theorem}{Theorem}
\newtheorem{prop}[theorem]{Proposition}
\newtheorem{corollary}[theorem]{Corollary}
\newtheorem{lemma}[theorem]{Lemma}
\newtheorem{defi}[theorem]{Definition}
\newtheorem{example}[theorem]{Example}
\def\<#1>{\langle\!\langle#1\rangle\!\rangle} 
\newcommand{\bN}{ {\mathbb N}}
\newcommand{\cS}{ {\mathcal S}}
\let\set\mathbb
\def\lc{\operatorname{lc}}
\def\rank{\operatorname{rank}}
\def\lt{\operatorname{lt}}
\def\im{\operatorname{im}}
\def\diag{\operatorname{diag}}
\begin{document}

\title{Reduction-Based Creative~Telescoping for Algebraic~Functions\titlenote{S.\ Chen was
supported by the NSFC grant 11501552 and
by the President Fund of the Academy of
Mathematics and Systems Science, CAS (2014-cjrwlzx-chshsh). This work was
also supported by the Fields Institute's 2015 Thematic Program on Computer
Algebra in Toronto, Canada.\\
M.\ Kauers was supported by the Austrian Science Fund (FWF): F50-04 and Y464-N18.\\
C.\ Koutschan was supported by the Austrian Science Fund (FWF): W1214.
}}

\numberofauthors{1}

\author{\medskip
Shaoshi Chen$^{1,2}$, \, Manuel Kauers$^{3}$, \, Christoph Koutschan$^{4}$ \\
\smallskip
       \affaddr{$^1$KLMM,\, AMSS, \,Chinese Academy of Sciences, Beijing, 100190, (China)}\\
       \smallskip
       \affaddr{$^2$Symbolic Computation Group, University of Waterloo, Ontario, N2L3G1, (Canada)}\\
              \smallskip
       \affaddr{$^3$Institute for Algebra, Johannes Kepler University, Altenberger Stra\ss e 69,
 A-4040 Linz, (Austria)}\\
        \smallskip
       \affaddr{$^4$RICAM, Austrian Academy of Sciences, Altenberger Stra\ss e 69, A-4040 Linz, (Austria)}\\
       \smallskip
      \email{schen@amss.ac.cn, manuel.kauers@jku.at}\\
      \email{christoph.koutschan@ricam.oeaw.ac.at}
}

\maketitle
\begin{abstract}
Continuing a series of articles in the past few years on creative telescoping using reductions,
we develop a new algorithm to construct minimal telescopers for algebraic functions. This algorithm is
based on Trager's Hermite reduction and on polynomial reduction, which was originally designed for
hyperexponential functions and extended to the algebraic case in this paper.
\end{abstract}

\category{I.1.2}{Computing Methodologies}{Symbolic and Algebraic Manipulation}[Algebraic Algorithms]

\terms{Algorithms, Theory}

\keywords{Algebraic function, Integral basis, Trager's Reduction, Telescoper}


\section{Introduction}\label{SECT:intro}

The classical question in symbolic integration is whether the integral of
a given function can be written in \lq\lq closed form\rq\rq. In its most restricted form,
the question is whether for a given function~$f$ belonging to some domain $D$
there exists another function~$g$, also belonging to~$D$, such that $f=g'$. For
example, if $D$ is the field of rational functions, then for $f=1/x^2$ we can
find $g=-1/x$, while for $f=1/x$ no suitable $g$ exists. When no $g$ exists
in~$D$, there are several other questions we may ask. One possibility is to ask
whether there is some extension~$E$ of $D$ such that in $E$ there exists some
$g$ with $g'=f$. For example, in the case of elementary functions, Liouville's
principle restricts the possible extensions~$E$, and there are algorithms which
construct such extensions whenever possible. Another possibility is
to ask whether for some modification $\tilde f\in D$ of~$f$ there exists a $g\in
D$ such that $\tilde f=g'$. Creative telescoping is a question of this
type. Here we are dealing with domains~$D$ containing functions in several
variables, say $x$ and~$t$, and the question is whether there is a linear
differential operator~$P$, nonzero and free of~$x$, such that there exists a
$g\in D$ with $P\cdot f=g'$, where $g'$ denotes the derivative of $g$ with
respect to~$x$. Typically, $g$~itself has the form $Q\cdot f$ for some operator
$Q$ (which may be zero and need not be free of~$x$). In this case, we call $P$
a telescoper for~$f$, and $Q$ a certificate for~$P$.

Creative telescoping is the backbone of definite integration. Readers not
familiar with this technique are referred to the literature~\cite{PWZbook1996,Zeilberger1990c,Zeilberger1991,Zeilberger1990,Koepf1998}
for motivation, theory, algorithms, implementations, and applications. There are
several ways to find telescopers for a given $f\in D$. In recent years, an
approach has become popular which has the feature that it can find a telescoper
without also constructing the corresponding certificate. This is interesting
because certificates tend to be much larger than telescopers, and in some
applications only the telescoper is of interest. This approach was first
formulated for rational functions $f\in C(t,x)$ in~\cite{BCCL2010} and later
generalized to rational functions in several variables~\cite{bostan13, lairez15}, to
hyperexponential functions~\cite{bostan13a} and, for the shift case, to hypergeometric
terms~\cite{chen15a} and binomial sums~\cite{bostan15}. In the present paper, we will extend
the approach to algebraic functions.

The basic principle of the general approach is as follows. Assume that the
$x$-constants $\mathrm{Const}_x(D)=\{\,c\in D:c'=0\,\}$ form a field and that $D$
is a vector space over the field of $x$-constants. Assume further that there is
some $\mathrm{Const}_x(D)$-linear map $[\cdot]\colon D\to D$ such that for every
$f\in D$ there exists a $g\in D$ with $f-[f]=g'$. Such a map is called a
\emph{reduction.} For example, in $D=C(t,x)$ Hermite reduction~\cite{Hermite1872} produces for
every $f\in D$ some $g\in D$ such that $f-g'$ is either zero or a rational function
with a square-free denominator. In this case, we can take $[f]=f-g'$.
In order to find a telescoper, we can compute $[f]$, $[\partial_t\cdot f]$, $[\partial_t^2\cdot f]$, \dots,
until we find that they are linearly dependent over $\mathrm{Const}_x(D)$.
Once we find a relation
$p_0[f] + \cdots + p_r[\partial_t^r\cdot f] = 0$,
then, by linearity,
$[p_0 f + \cdots + p_r \partial_t^r\cdot f] = 0$,
and then, by definition of $[\cdot]$, there exists a $g\in D$ such that $(p_0+\cdots + p_r\partial_t^r)\cdot f=g'$.
In other words, $P=p_0+\cdots + p_r\partial_t^r$ is a telescoper.

There are two ways to guarantee that this method terminates. The first requires that we already know for
other reasons that a telescoper exists. The idea is then to show that the
reduction $[\cdot]$ has the property that when $f\in D$ is such that there
exists a $g\in D$ with $g'=f$, then $[f]=0$. If this is the case and
$P=p_0+\cdots+p_r\partial_t^r$ is a telescoper for~$f$, then $P\cdot f$ is integrable
in~$D$, so $[P\cdot f]=0$, and by linearity $[f]$, \dots, $[\partial_t^r\cdot f]$ are
linearly dependent over $\mathrm{Const}_x(D)$. This means that the method won't
miss any telescoper. In particular, this argument has the nice feature that we
are guaranteed to find a telescoper of smallest possible order~$r$. This
approach was taken in~\cite{chen15a}.
The second way consists in showing that $\{\,[f]:f\in D\,\}$ is a finite-dimensional vector space over
$\mathrm{Const}_x(D)$. This approach was taken in~\cite{BCCL2010,bostan13a}. It has the
nice additional feature that every bound for the dimension of this vector space
gives rise to a bound for the order of the telescoper. In particular, it implies
the existence of a telescoper.

In this paper, we show that Trager's Hermite reduction for algebraic
functions directly gives rise to a reduction-based creative telescoping
algorithm via the first approach (Section~\ref{SECT:CT-1}). We will combine Trager's Hermite reduction
with a second reduction, called polynomial reduction (Section~\ref{sec:polynomial}), to obtain a reduction-based creative
telescoping algorithm for algebraic functions via the second approach (Section~\ref{SECT:CT-2}).
This gives a new proof of a bound for the order of the telescopers, and in
particular an independent proof for their existence.

A few years ago, Chen et al.~\cite{chen12d} have already considered the problem of creative
telescoping for algebraic functions. They have pointed out that by canceling residues
of the integrand, a given creative telescoping problem can be reduced to a creative
telescoping problem for a function with no residues, which may be much smaller than the
original function. For this smaller function, however, they still need to construct a
certificate. The algorithms presented in the present paper are the first which can find
telescopers for algebraic functions without also constructing corresponding certificates.
By Theorem~6 of~\cite{chen12d}, our results also translate into a certificate-free
creative telescoping algorithm for rational functions in three variables.

\section{Algebraic Functions}

Throughout the paper, let $C$ be a field of characteristic zero, $K=C(t)$, and $\bar K$ the algebraic closure of~$K$.
We consider algebraic functions over~$K$.  For some absolutely irreducible
polynomial $m\in K[x,y]$, we consider the field
$A=K(x)[y]/\langle m\rangle$. If $n=\deg_ym$, then every element of $A$ can be written uniquely in the form
$f=f_0+f_1y+\cdots+f_{n-1}y^{n-1}$ for some $f_0,\dots,f_{n-1}\in K(x)$.

The element $y\in A$ is a solution of the equation $m=0$,
because in $A$ we have $m=0$ by construction. The polynomial $m$ also admits
$n$ distinct solutions in the field
\[
  \bar K\<x-a>:=\bigcup_{r\in\set N\setminus\{0\}} \bar K(\!(\ (x-a)^{1/r}\ )\!)
\]
of formal Puiseux series around $a\in\bar K$. There are also $n$ distinct
solutions in the field
\[
  \bar K\<x^{-1}>:=\bigcup_{r\in\set N\setminus\{0\}} \bar K(\!(x^{-1/r})\!)
\]
of formal Puiseux series around~$\infty$.
Since $\bar K\<x^{-1}>$ and the $\bar K\<x-a>$ are fields, we can associate to every
$f\in A$ and every $a\in\bar K\cup\{\infty\}$ in a natural way $n$ distinct series
objects with fractional exponents, by plugging any of the $n$ distinct series solutions
of $m$ into the representation $f=f_0+\cdots+f_{n-1}y^{n-1}$.
In other words, for every $a\in\bar K\cup\{\infty\}$ there are $n$ distinct natural
ring homomorphisms from $A$ to $K\<x-a>$ or $K\<x^{-1}>$, respectively.

In the field $A$ as well as the fields $\bar K\<x-a>$ and $\bar K\<x^{-1}>$, we have
natural differentiations with respect to~$x$. For a series, differentiation is defined
termwise using the usual rules $\bigl((x-a)^{\nu+n}\bigr)'=(\nu+n)(x-a)^{\nu+n-1}$ and
$\bigl((x^{-1})^{\nu+n}\bigr)'=-(\nu+n)(x^{-1})^{\nu+n+1}$. For the elements of~$A$, note
first that $m(x,y)=0$ implies
\begin{equation}\label{eq:yprime}
  m(x,y)'=(\frac d{dx}m)(x,y) + (\frac d{dy}m)(x,y)y' = 0,
\end{equation}
so $y'=-(\frac d{dx}m)(x,y)/(\frac d{dy}m)(x,y)$. Regarding $m$ as element of $K(x)[y]$
and observing that $0<\deg_y\frac d{dy}m<n$, we have $\gcd(m,\frac d{dy}m)=1$ in $K(x)[y]$,
so $\frac{d}{dy}m$ is invertible in $A=K(x)[y]/\langle m\rangle$.
Note that we have $x'=1$ and $c'=0$ for all $c\in K=C(t)$, in particular also $t'=0$.
The derivative of an arbitrary
element $f\in A$, say $f=p(x,y)$ for some $p\in K(x)[y]$ of degree less than~$n$, is
\[f'=(\frac{d}{dx}p)(x,y)+(\frac d{dy}p)(x,y)y'.\]
Thus we have an action of the algebra $K(x)[\partial_x]$ of differential operators on~$A$.

The derivations on $A$ and on the series domains are compatible in the sense
that for every $f\in A$, the series associated to $f'$ are precisely the
derivatives of the series associated to~$f$.

In the context of creative telescoping, we will also need to differentiate with
respect to~$t$. The action of $K(x)[\partial_x]$ on $A$ and on the series domains
is extended to an action of $K(x)[\partial_x,\partial_t]$ on $A$ and on the series
domains. On $A$, the action of $\partial_t$ is defined as the unique derivation with
$\partial_t\cdot t=1$ and $\partial_t\cdot x=0$, analogously to the construction
above. For the series domains, $\partial_t$ acts on the coefficients
(which are elements of $\bar K$) in the natural way, and does not affect~$x$.
Since each particular element $c\in\bar K$ belongs to a finite algebraic extension of~$K$,
the result $\partial_t\cdot c$ is uniquely determined.
The actions of the larger operator algebra
$K(x)[\partial_x,\partial_t]$ on $A$ and on the series domains are compatible to
each other.

In this paper, the notation $f'$ will always refer to the derivative $\partial_x\cdot f$
with respect to~$x$, not with respect to~$t$.

\medskip

Trager's Hermite reduction for algebraic functions rests on the notion of
integral bases. Let us recall the relevant definitions and properties.
Although the elements of a Puiseux series ring $\bar K\<x-a>$ are formal
objects, the series notation suggests certain analogies with complex
functions. Terms $(x-a)^\alpha$ or $(\tfrac1x)^\alpha$ are
called \emph{integral} if $\alpha\geq0$. A series in
$\bar K\<x-a>$ or $\bar K\<x^{-1}>$ is called integral if it only contains integral
terms. A non-integral series is said to have a \emph{pole} at the reference
point. Note that in this terminology also $1/\sqrt{x}$ has a pole
at~$0$. Note also that the terminology only refers to $x$ but not to~$t$.

Integrality at $a\in\bar K$ is not preserved by differentiation,
but if $f$ is integral at~$a$, then so is $(x-a)f'$. Somewhat conversely,
integrality at infinity is preserved by differentiation, we even have the
stronger property that when $f$ is integral at infinity, then not only $f'$ but also $xf'=(x^{-1})^{-1}f'$ is
integral at infinity.

An element $f\in A=K(x)[y]/\langle m\rangle$
is called (locally) integral at $a\in\bar K\cup\{\infty\}$ if for every series
associated to $y$ the corresponding series for $f$ is integral.
The element $f$ is called (globally) integral if it is locally integral at every
$a\in\bar K$ (``at all finite places'').
This is the case if and only if the minimal polynomial of $f$ in $K[x,y]$ is monic
with respect to~$y$.
Because of Chevalley's theorem~\cite[page 9, Corollary 3]{Chevalley1951}, any
non-constant algebraic function has at least one pole. Equivalently, an element $f$ is
integral at all $a\in\bar K\cup\{\infty\}$ if and only if it is constant.

For an element $f\in A$ to have a ``pole'' at $a\in\bar K\cup\{\infty\}$ means
that $f$ is not locally integral at~$a$; to have a ``double pole'' at $a$ means
that $(x-a)f$ (or $\frac1xf$ if $a=\infty$) is not integral; to have a ``double
root'' at $a$ means that $f/(x-a)^2$ (or $f/(\frac1x)^2=x^2f$ if $a=\infty$) is integral,
and so on.

The set of all globally integral elements $f\in A$ forms a $K[x]$-submodule of~$A$.
A basis $\{\omega_1,\dots,\omega_n\}$ of this module is called an \emph{integral basis}
for~$A$. Such bases exist, and algorithms are known for computing them~\cite{trager84,Rybowicz:1991:ACI:120694.120715,vanHoeij94}.
For a fixed $a\in\bar K$, let $\bar K(x)_a$ be the ring of rational functions $p/q$
with $q(a)\neq0$, and write $\bar K(x)_\infty$ for the ring of all
rational functions $p/q$ with $\deg_x(p)\leq\deg_x(q)$.
Then the set of all $f\in A$ which are locally integral at some
fixed $a\in\bar K\cup\{\infty\}$ forms a $\bar K(x)_a$-module. A basis of this module is
called a \emph{local integral basis} at $a$ for~$A$. Also local integral bases can
be computed.

An integral basis $\{\omega_1,\dots,\omega_n\}$ is always also a $K(x)$-vector space
basis of~$A$. A key feature of integral bases is that they make poles explicit. Writing
an element $f\in A$ as a linear combination $f=\sum_{i=1}^n f_i\omega_i$ for some
$f_i\in K(x)$, we have that $f$ has a pole at $a\in\bar K$ if and only if at least one
of the $f_i$ has a pole there.

\begin{lemma}\label{lemma:1}
  Let $\{\omega_1,\dots,\omega_n\}$ be a local integral basis of $A$ at $a\in\bar K\cup\{\infty\}$.
  Let $f\in A$ and $f_1,\dots,f_n\in K(x)$ be such that $f=\sum_{i=1}^nf_i\omega_i$.
  Then $f$ is integral at $a$ if and only if each $f_i\omega_i$ is integral at~$a$.
\end{lemma}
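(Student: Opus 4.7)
The plan is to dispatch the two implications separately. The $(\Leftarrow)$ direction is essentially immediate: elements of $A$ integral at $a$ form an additive subgroup (in fact, the $\bar K(x)_a$-module on which the notion of local integral basis rests), so the sum of integral elements $f_i\omega_i$ is again integral at~$a$.

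For the $(\Rightarrow)$ direction, assume $f$ is integral at~$a$. Since $\{\omega_1,\dots,\omega_n\}$ is a local integral basis at~$a$, the module of elements of $A$ integral at $a$ is exactly the $\bar K(x)_a$-linear span of the $\omega_i$, so one can write $f=\sum_{i=1}^n g_i\omega_i$ for some $g_i\in \bar K(x)_a$. I then want to identify each $g_i$ with $f_i$ and conclude that each $f_i$ is a rational function without a pole at~$a$, whence each product $f_i\omega_i$ is integral at~$a$.

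The only real subtlety is justifying uniqueness of the basis expansion, since \emph{a priori} the two coefficient tuples live in different rings: $f_i\in K(x)$ while $g_i\in\bar K(x)_a\subseteq\bar K(x)$. To resolve this, I would embed $A$ into $\bar A:=\bar K(x)[y]/\langle m\rangle$, which is a field of $\bar K(x)$-dimension $n$ because $m$ is absolutely irreducible. Any $K(x)$-basis of $A$ is automatically a $\bar K(x)$-basis of~$\bar A$ (the $K(x)$-change-of-basis matrix between the $\omega_i$ and $\{1,y,\dots,y^{n-1}\}$ has nonzero determinant, hence is invertible already over $\bar K(x)$), so the expansion of an element of $A\subseteq\bar A$ in the $\omega_i$ is unique over $\bar K(x)$. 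This forces $g_i=f_i$, whence $f_i\in K(x)\cap\bar K(x)_a$; in other words, $f_i$ has no pole at~$a$. The case $a=\infty$ is identical, with $\bar K(x)_\infty$ in place of $\bar K(x)_a$.

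To finish, with $f_i$ pole-free at~$a$ and $\omega_i$ integral at~$a$, each of the $n$ Puiseux series associated to $f_i\omega_i$ is the product of a Laurent expansion of $f_i$ containing only non-negative integer powers of $(x-a)$ (or of $x^{-1}$ when $a=\infty$) and an integral Puiseux expansion of~$\omega_i$. Since integral series are closed under multiplication within the Puiseux fields defined in the excerpt, each such product is integral, and therefore $f_i\omega_i$ is integral at~$a$, completing the proof.
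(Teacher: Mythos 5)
Your proof is correct and follows essentially the same route as the paper: expand $f$ in the local integral basis with coefficients in $\bar K(x)_a$, identify these coefficients with the $f_i$ by uniqueness of the basis expansion, and conclude that each $f_i\omega_i$ is integral at~$a$. The only difference is that you spell out, via the scalar extension to $\bar K(x)[y]/\langle m\rangle$, why linear independence of the $\omega_i$ persists over $\bar K(x)$ --- a point the paper's proof simply asserts with ``because $\omega_1,\dots,\omega_n$ is a vector space basis of $A$''.
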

\begin{proof}
  The direction ``$\Leftarrow$'' is obvious. To show ``$\Rightarrow$'', suppose
  that $f$ is integral at~$a$. Then there exist $w_1,\dots,w_n\in\bar K(x)_a$ such that
  $f=\sum_{i=1}^nw_i\omega_i$. Thus $\sum_{i=1}^n(w_i-f_i)\omega_i=0$, and then
  $w_i=f_i$ for all $i$, because $\omega_1,\dots,\omega_n$ is a vector space basis of~$A$.
  As elements of $\bar K(x)_a$, the $f_i$ are integral at~$a$, and hence also all the $f_i\omega_i$
  are integral at~$a$.
\end{proof}

The lemma says in particular that poles of the $f_i$ in a linear combination
$\sum_{i=1}^n f_i\omega_i$ have no chance to cancel each other.

\begin{lemma}\label{lemma:e}
  Let $\{\omega_1,\dots,\omega_n\}$ be an integral basis of~$A$.
  Let $e\in K[x]$ and
  $M=((m_{i,j}))_{i,j=1}^n\in K[x]^{n\times n}$ be such that
  \[
    e\,\omega_i'=\sum_{j=1}^n m_{i,j}\omega_j
  \]
  for $i=1,\dots,n$ and $\gcd(e,m_{1,1},\dots,m_{n,n})=1$.
  Then $e$ is squarefree.
\end{lemma}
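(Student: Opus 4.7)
My plan is to argue by contradiction: assume $e$ has a repeated irreducible factor, and choose $a\in\bar K$ with $(x-a)^2\mid e$. The hypothesis $\gcd(e,m_{1,1},\dots,m_{n,n})=1$ forces the $K$-irreducible factor of $e$ with root $a$ to fail to divide at least one diagonal entry, so I can fix an index $i$ with $m_{i,i}(a)\neq 0$. All subsequent work is local at $a$.

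Rewrite $e\omega_i'=\sum_j m_{i,j}\omega_j$ as
\[
(x-a)\omega_i'=\sum_{j=1}^n \frac{(x-a)m_{i,j}}{e}\,\omega_j.
\]
The integrality of $\omega_i$ at $a$ gives (as noted just before Lemma~\ref{lemma:1}) that $(x-a)\omega_i'$ is itself integral at $a$. I will then apply Lemma~\ref{lemma:1} to conclude that every summand $\tfrac{(x-a)m_{i,j}}{e}\,\omega_j$ is individually integral at $a$. Specializing to $j=i$, the rational coefficient $\tfrac{(x-a)m_{i,i}}{e}$ has a pole at $a$ of order $v_a(e)-1\geq 1$ (since $m_{i,i}(a)\neq 0$), so integrality of the product forces every branch series of $\omega_i$ at $a$ to vanish to order at least $v_a(e)-1\geq 1$. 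Consequently $\omega_i/(x-a)$ is itself integral at $a$, hence lies in the $\bar K(x)_a$-span of the basis. Writing $\omega_i/(x-a)=\sum_k g_k\omega_k$ with $g_k\in\bar K(x)_a$ and comparing with the unique expansion $\omega_i=\sum_k \delta_{i,k}\omega_k$ in the basis yields $(x-a)g_k=\delta_{i,k}$, whence $g_i=1/(x-a)\notin \bar K(x)_a$, a contradiction.

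The main technical obstacle is that Lemma~\ref{lemma:1} is phrased for \emph{local} integral bases at $a$, while Lemma~\ref{lemma:e} hands us a \emph{global} one. I will bridge this gap by recording that the formation of the integral closure commutes with the flat base change $K[x]\to\bar K(x)_a$ (which factors through the étale extension $K[x]\to\bar K[x]$ followed by localization at the maximal ideal $(x-a)$). Hence the $K[x]$-basis $\{\omega_1,\dots,\omega_n\}$ of the globally integral elements is simultaneously a $\bar K(x)_a$-basis of the locally integral elements at $a$, and Lemma~\ref{lemma:1} applies to it verbatim at every finite place $a\in\bar K$. Once this identification is in place, every step of the argument above proceeds unchanged.
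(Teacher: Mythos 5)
The one step that does not follow is your selection of a \emph{diagonal} entry. In this paper the condition $\gcd(e,m_{1,1},\dots,m_{n,n})=1$ ranges over all $n^2$ entries of $M$ (compare Section~3, where the same condition is written $\gcd(e,m_{1,1},m_{1,2},\ldots,m_{n,n})=1$, and the paper's own proof of Lemma~\ref{lemma:e}, which argues that no factor of $e$ can be canceled by \emph{all} the $m_{i,j}$; moreover, the later applications only guarantee the all-entries gcd, not the diagonal one). From this hypothesis you may only conclude that \emph{some} entry $m_{i,j}$, possibly with $i\neq j$, satisfies $m_{i,j}(a)\neq0$; there is no reason a diagonal entry survives, so ``fix an index $i$ with $m_{i,i}(a)\neq 0$'' is a non sequitur, and the subsequent ``specializing to $j=i$'' rests on it. Fortunately the diagonal is never actually needed: take any pair $(i,j)$ with $m_{i,j}(a)\neq0$, expand $(x-a)\omega_i'=\sum_{k}\frac{(x-a)m_{i,k}}{e}\,\omega_k$, and apply Lemma~\ref{lemma:1} to the $j$-th summand; exactly as you argue, every branch of $\omega_j$ at $a$ must then vanish to positive order, so $\omega_j/(x-a)$ is integral at $a$, and your uniqueness-of-coefficients contradiction goes through verbatim. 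With this repair the proof is correct.

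Apart from that, your endgame is a longer detour than the paper's: the paper concludes directly from the integral-basis property that the coefficients $(x-a)m_{i,j}/e$ themselves lie in $\bar K(x)_a$ for \emph{all} $i,j$, so that $(x-a)^2\mid e$ would force $x-a$ to divide every $m_{i,j}$, contradicting the gcd hypothesis at once --- there is no need to transfer the pole onto $\omega_j$ and then contradict integrality of a basis element. Your closing paragraph (that a global integral basis is simultaneously a local integral basis at every finite place, so Lemma~\ref{lemma:1} applies) addresses a point the paper uses only tacitly; that fact is indeed what both arguments need, and your justification via localization and base change to $\bar K$ is acceptable.
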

\begin{proof}
  Let $a\in\bar K$ be a root of~$e$. We show that $a$ is not a multiple root.
  Since $\omega_i$ is integral, it is in particular locally integral at~$a$.
  Therefore $(x-a)\omega_i'$ is locally integral at~$a$.
  Since $\omega_1,\dots,\omega_n$ is an integral basis, it follows that
  $(x-a)m_{i,j}/e\in\bar K(x)_a$ for all~$i,j$.
  Because of $\gcd(e,m_{1,1},\dots,m_{n,n})=1$, no factor $x-a$ of $e$
  can be canceled by all the~$m_{i,j}$.
  Therefore the factor $x-a$ can appear in $e$ only once.
\end{proof}

\begin{lemma} \label{lemma:degM}
  Let $\{\omega_1,\dots,\omega_n\}$ be a local integral basis at infinity of~$A$.
  Let $e\in K[x]$ and $M=((m_{i,j}))_{i,j=1}^n\in K[x]^{n\times n}$
  be defined as in Lemma~\ref{lemma:e}. Then $\deg_x(m_{i,j})<\deg_x(e)$ for all $i,j$.
\end{lemma}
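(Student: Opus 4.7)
The plan is to argue in close analogy with Lemma~\ref{lemma:e}, but exploiting the stronger property stated in the text: at the place at infinity, not just $f'$ but even $xf'$ is integral whenever $f$ is. Since each $\omega_i$ belongs to the local integral basis at infinity, it is integral at infinity, and therefore $x\omega_i'$ is integral at infinity as well.

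Multiplying the defining relation $e\,\omega_i'=\sum_j m_{i,j}\omega_j$ by $x/e$ gives the representation
\[
  x\omega_i' \;=\; \sum_{j=1}^n \frac{x\,m_{i,j}}{e}\,\omega_j
\]
of the integral (at infinity) element $x\omega_i'$ with respect to the local integral basis $\{\omega_1,\dots,\omega_n\}$ at infinity. By definition of a local integral basis at infinity, $x\omega_i'$ also admits a representation $\sum_j g_j\omega_j$ with $g_j\in\bar K(x)_\infty$. Because $\omega_1,\dots,\omega_n$ is a vector space basis of $A$, these two decompositions must agree coefficient by coefficient (this is precisely the argument used in the proof of Lemma~\ref{lemma:1}), and so $xm_{i,j}/e=g_j\in\bar K(x)_\infty$ for all $i,j$.

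By the definition of $\bar K(x)_\infty$, membership $xm_{i,j}/e\in\bar K(x)_\infty$ is equivalent to $\deg_x(x\,m_{i,j})\le\deg_x(e)$, i.e.\ $\deg_x(m_{i,j})<\deg_x(e)$, which is the desired conclusion. There is no real obstacle here; the only point that requires any thought is invoking the ``stronger'' part of the integrality-at-infinity property (the extra factor $x$), which is exactly what converts the bound $\deg_x(m_{i,j})\le\deg_x(e)$ that a naive imitation of Lemma~\ref{lemma:e} would yield into the strict inequality $\deg_x(m_{i,j})<\deg_x(e)$ claimed by the lemma.
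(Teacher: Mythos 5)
Your proof is correct and follows essentially the same route as the paper: use that $x\omega_i'$ is integral at infinity (the strengthened integrality property there), read off via the local-integral-basis argument of Lemma~\ref{lemma:1} that $xm_{i,j}/e\in\bar K(x)_\infty$, and conclude $\deg_x(m_{i,j})<\deg_x(e)$. The paper's proof is just a more condensed version of exactly this argument.
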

\begin{proof}
  Since every $\omega_i$ is locally integral at infinity, so is every $x\,\omega_i'$.
  Since $\omega_1,\dots,\omega_n$ is an integral basis at infinity, it follows that
  $xm_{i,j}/e\in\bar K(x)_\infty$ for all~$i,j$. This means that $1+\deg_x(m_{i,j})\leq\deg_x(e)$
  for all~$i,j$, and therefore $\deg_x(m_{i,j})<\deg_x(e)$, as claimed.
\end{proof}

A $K(x)$-vector space basis $\{\omega_1,\dots,\omega_n\}$ of $A$ is
called \emph{normal} at $a\in\bar K\cup\{\infty\}$ if there exist $r_1,\dots,r_n\in
K(x)$ such that $\{r_1\omega_1,\dots,r_n\omega_n\}$ is a local integral basis
at~$a$. Trager shows how to construct
an integral basis which is normal at infinity from a given integral basis and
a given local integral basis at infinity~\cite{trager84}.

Although normality is a somewhat weaker condition on a basis than integrality,
it also excludes the possibility that poles in the terms of a linear combination
of basis elements can cancel:

\begin{lemma}\label{lemma:3}
  Let $\{\omega_1,\dots,\omega_n\}$ be a basis of~$A$
  which is normal at some $a\in\bar K\cup\{\infty\}$.
  Let $f=\sum_{i=1}^n f_i\omega_i$ for some $f_1,\dots,f_n\in K(x)$.
  Then $f$ has a pole at $a$ if and only if
  there is some $i$ such that $f_i\omega_i$ has a pole at~$a$.
\end{lemma}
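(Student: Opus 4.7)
The plan is to reduce this directly to Lemma~\ref{lemma:1} by absorbing the ``normalizing'' factors into the basis elements. By the definition of normality, there exist $r_1,\dots,r_n\in K(x)$ such that $\{r_1\omega_1,\dots,r_n\omega_n\}$ is a local integral basis at~$a$. The idea is to rewrite the given linear combination in terms of this local integral basis and then invoke Lemma~\ref{lemma:1}.

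Concretely, I would start from the representation $f=\sum_{i=1}^n f_i\omega_i$ and rewrite it as
\[
  f = \sum_{i=1}^n \frac{f_i}{r_i}\,(r_i\omega_i),
\]
noting that each coefficient $f_i/r_i$ lies in $K(x)$, so the hypotheses of Lemma~\ref{lemma:1} are satisfied for the local integral basis $\{r_1\omega_1,\dots,r_n\omega_n\}$. Applying that lemma, $f$ is integral at $a$ if and only if each summand $(f_i/r_i)(r_i\omega_i) = f_i\omega_i$ is integral at~$a$. Taking the contrapositive yields exactly the ``pole'' version stated here: $f$ has a pole at $a$ if and only if at least one $f_i\omega_i$ does.

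There is essentially no real obstacle: the lemma is a mild generalization of Lemma~\ref{lemma:1}, and the only thing to verify is that the rewriting is legitimate, namely that $r_i\neq 0$ (which follows from $\{r_i\omega_i\}$ being a basis) and that $f_i/r_i\in K(x)$ (immediate since both lie in $K(x)$). No discussion of the two cases $a\in\bar K$ and $a=\infty$ is needed separately, since Lemma~\ref{lemma:1} already handles both uniformly through its appeal to local integrality.
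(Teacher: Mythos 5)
Your proposal is correct and follows exactly the paper's own argument: rewrite $f=\sum_i (f_ir_i^{-1})(r_i\omega_i)$ with respect to the local integral basis $\{r_1\omega_1,\dots,r_n\omega_n\}$ provided by normality, and apply Lemma~\ref{lemma:1}. Nothing further is needed.
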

\begin{proof}
  Let $r_1,\dots,r_n\in K(x)$ be such that $r_1\omega_1,\dots,r_n\omega_n$ is a
  local integral basis at~$a$. By $f=\sum_{i=1}^n
  (f_ir_i^{-1})(r_i\omega_i)$ and by Lemma~\ref{lemma:1}, $f$~is integral at~$a$ iff all
  $f_ir_i^{-1}r_i\omega_i=f_i\omega_i$ are integral at~$a$.
\end{proof}

\section{Hermite Reduction}\label{sec:hermite}

We now recall the Hermite reduction for algebraic functions~\cite{trager84,ACA1992,bronstein98}.
Let $\{\omega_1,\ldots,\omega_n\}$ be an integral basis for~$A$.
Further let $e, m_{i,j}\in K[x]$ ($1\leq i,j\leq n$) be such that
$e\omega_i'=\sum_{j=1}^n m_{i,j}\omega_i$ and
$\gcd(e,m_{1,1},m_{1,2},\ldots,m_{n,n})=1$.
For describing the Hermite reduction we fix an integrand $f\in A$ and represent it in the
integral basis, i.e., $f=\sum_{i=1}^n (f_i/D)\,\omega_i$ with
$D, f_1,\ldots,f_n\in K[x]$. The purpose is to find $g, h\in A$ such that
$f=g' + h$ and $h=\sum_{i=1}^n(h_i/D^\ast)\,\omega_i$ with $h_1,\ldots,h_n\in K[x]$
and $D^\ast$ denoting the squarefree part of~$D$.
As differentiating the $\omega_i$ can introduce
denominators, name\-ly the factors of~$e$, it is convenient to consider those
denominators from the very beginning on, which means that we shall assume
$e\mid D$. Note that $\gcd(D,f_1,\ldots,f_n)$ can then be nontrivial.
Let~$v\in K[x]$ be a nontrivial squarefree factor of~$D$ of multiplicity~$\mu>1$.
Then~$D = uv^\mu$ for some $u\in K[x]$ with $\gcd(u, v)=1$ and $\gcd(v,v')=1$.
One step of the Hermite reduction is as follows:
\begin{equation}\label{eq:hred}
  \sum_{i=1}^n \frac{f_i}{uv^\mu}\omega_i =
  \biggl(\sum_{i=1}^n\frac{g_i}{v^{\mu-1}}\omega_i\biggr)' +
  \sum_{i=1}^n \frac{h_i}{uv^{\mu-1}}\omega_i,
\end{equation}
where $g_i, h_i \in K[x]$ and~$\deg_x(g_i)< \deg_x(v)$.
The existence of such~$g_i$'s and~$h_i$'s follows from the crucial fact that
the elements $s_i :=  uv^\mu(v^{1-\mu}\omega_i)'$ with $i\in \{ 1, \ldots, n\}$
form a local integral basis at each root of~$v$~\cite[page 46]{trager84}.
By a repeated application of such reduction steps, one can decompose any $f\in A$
as $f=g' + h$ where the denominators of the coefficients of $h$ are squarefree
and the coefficients of $g$ are proper rational functions (i.e., their numerators
have smaller degree than their denominators).

It was observed that Hermite reduction itself often takes less time than the construction
of an integral basis. If Hermite reduction is applied to some other basis, for instance
the standard basis $\{1,y,\dots,y^{n-1}\}$, it either succeeds or it runs into a division by zero.
Bronstein~\cite{bronstein98a} noticed that when a division by zero occurs, then the basis can
be replaced by some other basis that is a little closer to an integral basis, just
as much as is needed to avoid this particular division by zero. After finitely many
such basis changes, the Hermite reduction will come to an end and produce a correct
output. This variant is known as lazy Hermite reduction.

\section{Telescoping via reductions: \hskip0ptplus1fill\break first approach} \label{SECT:CT-1}
Recall from the introduction that reduction-based creative telescoping requires
some $K$-linear map $[\cdot]\colon A\to A$ with the property that
$f-[f]$ is integrable in $A$ for every $f\in A$. This is sufficient for the
correctness of the method, but additional properties are needed in order to
ensure that the method terminates.

As also explained already in the introduction, one possibility consists in
showing that $[f]=0$ whenever $f$ is integrable. Trager showed that his
Hermite reduction has this property~\cite[page 50, Theorem 1]{trager84}.
For the sake of completeness, we reproduce his proof here.

\begin{lemma}\label{lemma:pole_at_inf}
Let $W=\{\omega_1,\dots,\omega_n\}$ be an integral basis for $A$ that is normal at
infinity. Let $g=\sum_{i=1}^ng_i\omega_i\in A$ be such that all its
coefficients $g_i\in K(x)$ are proper rational functions. If an integral
element $f\in A$ has a pole at infinity, then also $f+g$ has a pole at
infinity.
\end{lemma}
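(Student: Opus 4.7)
The plan is to localize the pole at infinity of $f$ to a single summand in the integral basis expansion, and then argue that adding the corresponding coefficient of $g$ cannot kill it, because $g$'s coefficients are strictly smaller at infinity than $f$'s coefficients.

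First I would expand $f=\sum_{i=1}^n f_i\omega_i$. Since $f$ is integral (globally) and $\{\omega_i\}$ is an integral basis, the $f_i$ lie in $K[x]$. By Lemma~\ref{lemma:3} applied at $a=\infty$, the assumption that $f$ has a pole at infinity yields an index $j$ such that $f_j\omega_j$ has a pole at infinity; in particular $f_j\neq 0$. The target is to show that $(f_j+g_j)\omega_j$ still has a pole at infinity, since then another application of Lemma~\ref{lemma:3} (this time to $f+g=\sum_i(f_i+g_i)\omega_i$) will finish the proof.

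To compare poles at infinity it is convenient to use the valuation $v_\infty$ on $K(x)$, for which $v_\infty(h)<0$ means exactly ``$h$ has a pole at infinity''. Since $\{\omega_i\}$ is normal at infinity, pick $r_1,\dots,r_n\in K(x)$ with $\{r_1\omega_1,\dots,r_n\omega_n\}$ a local integral basis at infinity. Then, using Lemma~\ref{lemma:1}, the statement ``$h\,\omega_j$ has a pole at infinity'' translates to $v_\infty(h/r_j)<0$, i.e.\ $v_\infty(h)<v_\infty(r_j)$. So our hypothesis reads $v_\infty(f_j)<v_\infty(r_j)$, and our goal is $v_\infty(f_j+g_j)<v_\infty(r_j)$.

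The main calculation is a one-line valuation comparison: $f_j\in K[x]$ gives $v_\infty(f_j)\leq 0$, while $g_j$ being a proper rational function gives $v_\infty(g_j)\geq 1$. Hence $v_\infty(f_j)<v_\infty(g_j)$, and the usual ultrametric equality for unequal valuations yields $v_\infty(f_j+g_j)=v_\infty(f_j)$. Combining with the hypothesis gives $v_\infty(f_j+g_j)=v_\infty(f_j)<v_\infty(r_j)$, so $(f_j+g_j)\omega_j$ has a pole at infinity, and Lemma~\ref{lemma:3} concludes.

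I do not anticipate a serious obstacle. The only care needed is to justify the two appeals to Lemma~\ref{lemma:3} (which require normality at infinity, exactly the hypothesis given) and to make sure we do not accidentally allow the case $f_j=0$, which is excluded precisely because $f_j\omega_j$ was chosen to carry the pole at infinity.
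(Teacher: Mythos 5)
Your proof is correct, and while it shares the paper's skeleton (expand $f$ in the integral basis, use Lemma~\ref{lemma:3} to locate an index $j$ with $f_j\omega_j$ having a pole at infinity, show the pole survives adding $g_j$, then apply Lemma~\ref{lemma:3} again), the heart of the argument is genuinely different. The paper splits into two cases --- $\deg_x f_j>0$, or $f_j$ a nonzero constant with $\omega_j$ nonconstant --- and in both cases invokes Chevalley's theorem (an integral element with no finite poles has no root at infinity, resp.\ has a pole at infinity if nonconstant) to see that $(f_j+g_j)\omega_j$ keeps its pole. You instead exploit normality at infinity quantitatively: the normalizing factors $r_j$ turn the condition ``$h\omega_j$ has a pole at infinity'' into the valuation inequality $v_\infty(h)<v_\infty(r_j)$, and then the ultrametric equality $v_\infty(f_j+g_j)=v_\infty(f_j)$ (valid since $v_\infty(f_j)\le 0<1\le v_\infty(g_j)$) finishes uniformly, with no case distinction and no appeal to Chevalley. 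The one point worth tightening is your citation of Lemma~\ref{lemma:1}: for the single-term combination $h\omega_j=(h/r_j)(r_j\omega_j)$ its stated conclusion is tautological, and what you actually need is the fact established inside its proof (equivalently, the definition of a local integral basis as a module basis over $\bar K(x)_\infty$ together with uniqueness of coefficients), namely that an element integral at infinity has its coefficient $h/r_j$ in $\bar K(x)_\infty$, i.e.\ $v_\infty(h/r_j)\ge 0$. With that said explicitly, your valuation route is a clean and slightly more self-contained alternative to the paper's Chevalley-based case analysis, though the paper's version avoids introducing the $r_j$ beyond their implicit use in Lemma~\ref{lemma:3}.
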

\begin{proof}
Since $f$ is assumed to be integral we can write it as
$f=f_1\omega_1+\cdots+f_n\omega_n$ with $f_i\in K[x]$.
If $f$ has a pole at infinity, there is at least one index~$i$
such that $f_i\omega_i$ has a pole at infinity. There are two cases why this
can happen.
\renewcommand{\labelenumi}{(\alph{enumi})}
\begin{enumerate}
\item The polynomial~$f_i$ has positive degree.  This means that $f_i+g_i$ has a
  pole at infinity, because the $g_i$ are proper rational functions.
  Thus $(f_i+g_i)\omega_i$ has a pole at infinity, because $\omega_i$ has no poles
  at finite places and therefore no root at infinity.
\item The integral basis element $\omega_i$ is not constant and $f_i$ is not zero. Hence
  $\omega_i$ has a pole at infinity, and this also implies that $(f_i+g_i)\omega_i$
  has a pole at infinity, again employing the fact that $g_i$ is a proper rational function.
\end{enumerate}
In both cases, therefore, $f+g=\sum_{i=1}^n(f_i+g_i)\omega_i$ has a pole at
infinity by Lemma~\ref{lemma:3}.
\end{proof}

\begin{theorem}\label{thm:intiff0}
  Suppose that $f\in A$
  has a double root at infinity (i.e., every series in $\bar K\<x^{-1}>$
  associated to $f$ only contains monomials $(1/x)^\alpha$ with $\alpha\geq2$).
  Let $W=\{\omega_1,\dots,\omega_n\}$
  be an integral basis for $A$ that is normal at infinity.
  If $f=g'+h$ is the result of the Hermite reduction with respect to~$W$,
  then $h=0$ if and only if $f$ is integrable in~$A$.
\end{theorem}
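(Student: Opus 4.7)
The $(\Leftarrow)$ direction is immediate: if $h=0$ then $f=g'$, which is integrable. For the converse, we assume $G'=f$ for some $G\in A$ and set $H:=G-g$, so that $H'=f-g'=h$. The plan is to show that $H$ is integral at every place of~$A$; by Chevalley's theorem this forces $H\in K$, and then $h=H'=0$.

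For the finite places, we exploit the form of the Hermite remainder. Writing $h=\sum_{i=1}^{n}(h_{i}/D^{\ast})\,\omega_{i}$ with $D^{\ast}\in K[x]$ squarefree and each $\omega_{i}$ integral at every finite $a\in\bar K$, we observe that at points $a$ with $D^{\ast}(a)\neq 0$ the function $h$ is integral, while at a root $a$ of $D^{\ast}$ the factor $h_{i}/D^{\ast}$ contributes a simple pole and each $\omega_{i}$ contributes only non-negative exponents. Hence every Puiseux series of $h$ at every finite $a$ has smallest exponent $\geq -1$. If $H$ had at some finite~$a$ a Puiseux series with smallest exponent $-\alpha<0$, then differentiation would produce in $h=H'$ a leading term of exponent $-\alpha-1<-1$ with nonzero coefficient, contradicting the bound just obtained. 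So $H$ is integral at every finite place.

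For the place at infinity, the double-root hypothesis on $f$ is what we need. Each Puiseux series of $f$ at infinity involves only monomials $x^{-\alpha}$ with $\alpha\geq 2$; termwise integration gives $\int x^{-\alpha}\,dx=x^{1-\alpha}/(1-\alpha)$ with exponent $\leq -1$, and since $f$ has no $x^{-1}$ term no logarithm ever arises. Thus the Puiseux series of $G$ at infinity (determined up to an additive constant) have only exponents $\leq 0$, i.e., $G$ is integral at infinity. If $H$ still had a pole at infinity, then, since $H$ is integral and $g=\sum g_{i}\omega_{i}$ has proper-rational coefficients $g_{i}$, Lemma~\ref{lemma:pole_at_inf} would force $H+g=G$ to have a pole at infinity, a contradiction. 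Hence $H$ is integral at every place, and Chevalley's theorem concludes that $H$ is a constant, so $h=0$.

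The main delicacy is the Puiseux-series bookkeeping in the two local analyses: verifying that the squarefree-denominator form produced by Hermite reduction really forces the series exponents of $h$ to be $\geq -1$ at every finite place (which is where integrality of the $\omega_{i}$ is essential), and matching the double-root-at-infinity hypothesis with the absence of a logarithmic obstruction to $G$'s Puiseux antiderivative at infinity, so that $G$ is genuinely integral there and Lemma~\ref{lemma:pole_at_inf} can be invoked.
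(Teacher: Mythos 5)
Your proof is correct and follows essentially the same route as the paper's: reduce to showing that an antiderivative of $h$ (your $H=G-g$) has no finite poles (squarefree denominators plus integrality of the $\omega_i$, so the series of $h$ have exponents $\geq-1$) and no pole at infinity (the double root of $f=(g+H)'$ makes $g+H$ integral at infinity, contradicting Lemma~\ref{lemma:pole_at_inf} if $H$ had a pole there), and then invoke Chevalley's theorem. The only cosmetic difference is that you verify integrality of $G$ at infinity by termwise integration of its Puiseux series, where the paper phrases the same fact as $g+H$ having at least a single root at infinity.
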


\begin{proof}
The direction ``$\Rightarrow$'' is trivial. To show the implication
``$\Leftarrow$'' assume that $f$ is integrable in~$A$. From $f=g'+h$ it follows that
then also $h$ is integrable in~$A$; let $H\in A$ be such that $H'=h$.  In order to show
that $h=0$, we show that $H$ is constant.  To this end, it suffices to show that
it has neither finite poles nor a pole at infinity; the claim then follows from
Chevalley's theorem.

It is clear that $H$ has no finite poles because $h$ has at most simple poles
(i.e., all series associated to $h$ have only exponents $\geq-1$).
This follows from the facts that the $\omega_i$ are integral and that
the coefficients of~$h$ have squarefree denominators.

If $H$ has a pole at infinity, then by Lemma~\ref{lemma:pole_at_inf} also
$g+H$ must have a pole at infinity, because Hermite reduction produces
$g=\sum_i g_i\omega_i$ with proper rational functions~$g_i$.  On the other
hand, since $f=g'+h=(g+H)'$ has at least a double root at infinity by
assumption, $g+H$ must have at least a single root at infinity. This is
a contradiction.
\end{proof}

Note that the condition in Theorem~\ref{thm:intiff0} that $f$ has a double
root at infinity is not a restriction at all, as it can always be achieved by
a suitable change of variables. Let $a\in C$ be a regular point; this means
that all series in $\bar K\<x-a>$ associated to $f$ are formal power series. By the substitution
$x\to a+1/x$ the regular point~$a$ is moved to infinity. From
\[
  \int f(x) \,\mathrm{d}x = \int f\left(\frac{1}{x}+a\right)\left(-\frac{1}{x^2}\right) \mathrm{d}x
\]
we see that the new integrand has a double root at infinity.

Moreover, since the action of $\partial_t$ on series domains is defined coefficient-wise,
it follows that when $f$ has at least a double root at infinity (with respect to~$x$),
then this is also true for $\partial_t\cdot f, \partial_t^2\cdot f, \partial_t^3\cdot f,\dots$,
and then also for every $K$-linear combination $p_0f+p_1\partial_t\cdot f+\cdots+p_r\partial_t^r\cdot f$.
Thus Theorem~\ref{thm:intiff0} implies that $p_0+p_1\partial_t+\cdots+p_r\partial_t^r$ is a telescoper for $f$ if
\emph{and only if} $[p_0+p_1\partial_t+\cdots+p_r\partial_t^r]=0$.

We already know for other reasons~\cite{Zeilberger1990,chyzak00,chen12d} that
telescopers for algebraic functions exist, and therefore the re\-duc\-tion-based
creative telescoping procedure with Hermite reduction with respect to an
integral basis that is normal at infinity as reduction function succeeds when
applied to an integrand $f\in A$ that has a double root at infinity.
In particular, the method finds a telescoper of smallest possible order.
Again, if $f$ has no double root at infinity, we can produce one by a change of variables.
Note that a change of variables $x\to a+1/x$ with $a\in C$ has no effect on
the telescoper.

\begin{example}\label{ex:ct}
We consider the algebraic function $f=y/x^2$ where $y$ is a solution of the
third-degree polynomial equation $m(x,y) = y^3 + y + x + t = 0$. An integral
basis for $A=K(x)[y]/\langle m\rangle$ that is normal at infinity is given by
$\omega_1=1$, $\omega_2=y$, $\omega_3=y^2$.  (This means that employing lazy
Hermite reduction avoids completely the computation of an integral basis in
this example.)

By solving Equation~\eqref{eq:yprime} for $y'$ we obtain
\[
  y' = \frac{-6y^2 + 9(t+x)y - 4}{27x^2+54tx+27t^2+4}.
\]
Then for the differentiation matrix~$\frac1eM$, a simple calculation yields
\[
  \begin{pmatrix} \omega_1' \\[1pt] \omega_2' \\[1pt] \omega_3' \end{pmatrix} =
  \frac{1}{e} \begin{pmatrix} 0 & 0 & 0 \\[1pt] -4 & 9 (t+x) & -6 \\[1pt] 12 (t+x) & 4 & 18 (t+x) \\ \end{pmatrix}
  \begin{pmatrix} \omega_1 \\[1pt] \omega_2 \\[1pt] \omega_3 \end{pmatrix}
\]
with $e=27x^2+54xt+27t^2+4$. Thus we write $f=\sum_{i=1}^3 (f_i/D) \omega_i$ with
$f_1=f_3=0$, $f_2=e$, and $D=x^2e$.  After a single step the Hermite reduction
delivers the result
\[
  f = \biggl(\, \underbrace{\vphantom{\frac{1}{(x)}} -\frac{y}{x}}_{\textstyle=g_0} \;\biggr)' +\>
  \underbrace{\frac{-6y^2+9(x+t)y-4}{x(27x^2+54xt+27t^2+4)}}_{\textstyle=h_0}.
\]
As the Hermite remainder~$h_0$ is nonzero, Theorem~\ref{thm:intiff0} tells us that
$f$ is not integrable in~$A$. Hence we continue by applying Hermite reduction to
\[
  \partial_t\cdot f = \frac{-6y^2+9(x+t)y-4}{x^2(27x^2+54xt+27t^2+4)}.
\]
Note that we could as well take $\partial_t\cdot h_0$ instead of $\partial_t\cdot f$, which
in general should result in a faster algorithm.
Again after a single reduction step, the decomposition $\partial_t\cdot f = g_1' + h_1$
is obtained, where
\begin{align*}
  g_1 &= \frac{6y^2-9ty+4}{x(27t^2+4)} \\
  h_1 &= \frac{6\bigl((9x+27t)y^2-(27xt+27t^2-2)y+6x+18t\bigr)}{x(27t^2+4)(27x^2+54xt+27t^2+4)}.
\end{align*}
Since $h_0$ and $h_1$ are linearly independent over $K=C(t)$, we continue with
$\partial_t^2\cdot f$.
This time however, it is preferable to start the Hermite reduction
with $\partial_t\cdot h_1$, which is given by
\[
  \frac{1}{x(27t^2+4)^2(27x^2+54xt+27t^2+4)^2}.
\]
Setting $v=27x^2+54xt+27t^2+4=e$ and doing one reduction step,
the Hermite remainder $h_2$ is found to be
\begin{multline*}
 \bigl(6\bigl((-729xt-1539t^2+96)y^2+(1215xt^2-144x+1215t^3-{}\\
 306t)y-486xt-1026t^2+64\bigr)\bigr)\mathrel{\big\slash}\bigl(x(27t^2+4)^2e\bigr).
\end{multline*}
The corresponding integrable part $g_2$ is not displayed here for space reasons.

Now one can find a linear dependence between $h_0,h_1,h_2$ that gives rise to the telescoper
$(27t^2+4)\partial_t^2+81t\partial_t+24$, which is indeed the minimal one for this example.
\end{example}

\section{Polynomial Reduction}\label{sec:polynomial}

Recall that instead of requesting that $[f]=0$ if and only if $f$ is integrable
(first approach), we can also justify the termination of reduction-based
creative telescoping by showing that the $K$-vector space $\{\,[f]:f\in A\,\}$
has finite dimension (second approach). If $[\cdot]$ is just the Hermite
reduction, we do not have this property. We therefore introduce below an
additional reduction, called \emph{polynomial reduction,} which we apply after
Hermite reduction. We then show that the combined reduction (Hermite reduction
followed by polynomial reduction) has the desired dimension property for the
space of remainders. As a result, we obtain a new bound on the order of the
telescoper, which is similar to those in~\cite{chen12d,chen14a}.

In this approach, we use two integral bases. First we use a global integral basis (not
necessarily normal at infinity) in order to perform Hermite reduction. Then we write the
remainder $h$ with respect to some local integral basis at infinity and perform the
polynomial reduction on this representation.

Throughout this section let $W=(\omega_1,\ldots,\omega_n)^T\in A^n$ be such
that $\{\omega_1, \ldots, \omega_n\}$ is a global integral basis of~$A$, and
let $e\in K[x]$ and $M=(m_{i,j})\in K[x]^{n\times n}$ be such that $eW'=MW$
and $\gcd(e, m_{1, 1}, m_{1, 2}, \ldots, m_{n ,n})=1$. The Hermite reduction
described in Section~\ref{sec:hermite} decomposes an input element $f\in A$
into the form
\[
  f = g' + h = g' + \sum_{i=1}^n \frac{h_i}{de} \omega_i,\qquad
  g, h\in A,
\]
with $h_i, d\in K[x]$ such that $\gcd(d, e)=\gcd(h_i,de)=1$ and $d$ is squarefree.
\begin{lemma}\label{LEM:d}
If $h$ is integrable in~$A$, then $d$ is in~$K$.
\end{lemma}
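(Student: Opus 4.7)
The plan is to show that if $h=H'$ for some $H\in A$, then $H$ itself must already be globally integral; once that is established, matching $h$ against $H'$ term by term in the basis $\{\omega_1,\dots,\omega_n\}$ forces $d$ to divide every $h_j$, after which the coprimality $\gcd(h_j,de)=1$ collapses $d$ to an element of $K$.

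The substantive step --- and the main obstacle --- is showing that $H$ is globally integral. Fix $a\in\bar K$ and a place $P$ of $A$ above $a$. Because $d$ is squarefree, $e$ is squarefree by Lemma~\ref{lemma:e}, and $\gcd(d,e)=1$, the polynomial $de$ has a root of multiplicity at most $1$ at $a$, so $1/(de)$ has at worst a simple pole at $a$. Since every $\omega_j$ is integral and every $h_j$ is a polynomial, the Puiseux series of $h$ at $P$ has all exponents of $(x-a)$ bounded below by $-1$. On the other hand, termwise differentiation of a Puiseux series with leading $(x-a)$-exponent $\alpha\neq 0$ produces a series with leading exponent $\alpha-1$. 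Hence, if $H$ had a pole at $P$ with leading exponent $\alpha<0$, then $H'=h$ would have a term of exponent $\alpha-1<-1$, contradicting the preceding bound. Since $P$ was arbitrary, $H$ has no finite poles and is therefore globally integral.

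The remaining calculation is routine. Write $H=\sum_{i=1}^n H_i\omega_i$ with $H_i\in K[x]$; differentiating and using $eW'=MW$ gives
\[
  H' = \sum_{j=1}^n \frac{eH_j' + \sum_{i=1}^n H_i m_{i,j}}{e}\,\omega_j.
\]
Comparing coefficients with $h=\sum_{j=1}^n\frac{h_j}{de}\,\omega_j$ and using that $\omega_1,\dots,\omega_n$ is a $K(x)$-basis of $A$, one obtains
\[
  \frac{h_j}{d} = eH_j' + \sum_{i=1}^n H_i m_{i,j}\in K[x]
\]
for every $j$, so $d\mid h_j$ in $K[x]$. Combined with the coprimality $\gcd(h_j,de)=1$, this forces $d$ to be a unit of $K[x]$, i.e., $d\in K$.
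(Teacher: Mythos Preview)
Your proof is correct and follows essentially the same approach as the paper: both show that the antiderivative $H$ must be globally integral via a pole-order argument (the paper phrases it tersely as ``$h$ would have a pole of multiplicity greater than~$1$'' at any root of the denominator of~$H$, while you work explicitly with Puiseux exponents), and then conclude $d\mid h_j$ by coefficient comparison. Your version is more detailed, spelling out the Puiseux-series step and the final comparison that the paper leaves implicit.
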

\begin{proof}
Suppose that $h$ is integrable in~$A$, i.e., there exist $a, b_i\in K[x]$
such that $h = \bigl(\frac{1}{a}\sum_{i=1}^n b_i \omega_i\bigr)'$. Then
\[
  h = \sum_{i=1}^n \frac{h_i}{de}\omega_i= \sum_{i=1}^n \biggl(\Bigl(\frac{b_i}{a}\Bigr)' \omega_i +
  \frac{b_i}{a e} \sum_{j=1}^n  m_{i, j}\omega_j\biggr).
\]
We show that $a$ is constant. Otherwise, for any irreducible factor $p$ of~$a$, we would have that $h$ has a pole of
multiplicity greater than $1$ at the roots of~$p$. This contradicts
the fact that $d, e$ are squarefree. Thus, $d$ is a constant.
\end{proof}

By the extended Euclidean algorithm, we compute $u_i, v_i\in K[x]$ such that
$h_i = u_i d + v_i e$ and $\deg_x(v_i) < \deg_x(d)$. Then the Hermite remainder~$h$
decomposes as
\begin{equation}\label{EQ:h}
  \sum_{i=1}^n \frac{h_i}{de}\omega_i =  \sum_{i=1}^n \frac{u_i}{e}\omega_i + \sum_{i=1}^n \frac{v_i}{d}\omega_i.
\end{equation}

We now introduce the \emph{polynomial reduction} whose goal is to confine the $u_i$ to a finite-dimensional
vector space over~$K$. Similar reductions have been introduced and used in creative telescoping
for hyperexponential functions~\cite{bostan13a} and hypergeometric terms~\cite{chen15a}.
Let $V = (\nu_1, \ldots, \nu_n)^T\in A^n$ be such that its entries form a $K(x)$-basis of~$A$,
and let $a\in K[x]$ and $B = (b_{i, j})\in K[x]^{n \times n}$ be such that $aV'=BV$ and
$\gcd(a, b_{1, 1}, b_{1, 2}, \ldots, b_{n ,n})=1$. Let $p = (p_1, \ldots, p_n)\in K[x]^n$. Then
\begin{equation} \label{EQ:polyred}
  (pV)' = \sum_{i=1}^n (p_i \nu_i)' = \frac{ap' + pB}{a}\, V.
\end{equation}
This motivates us to introduce the following definition.
\begin{defi}
Let the map $\phi_V\colon K[x]^n \rightarrow K[x]^n$
be defined by $\phi_V(p) = ap' + pB$ for any $p\in K[x]^n$.
We call $\phi_V$ the \emph{map for polynomial reduction} with respect to~$V$, and call
the subspace $\im(\phi_V) = \{\phi_V(p) \mid p \in K[x]^n\}$
the \emph{subspace for polynomial reduction} with respect to~$V$.
\end{defi}

Note that, by construction and because of Lemma~\ref{LEM:d}, $q\in K[x]^n$ is in
$\im(\phi_V)$ if and only if $\frac{q}{a}V$ is integrable in~$A$.

We can always view an element of $K[x]^n$ (resp. $K[x]^{n\times n}$) as a polynomial in~$x$
with coefficients in~$K^n$ (resp. $K^{n\times n}$). In this sense we use the notation $\lc(\cdot)$
for the leading coefficient and $\lt(\cdot)$ for the leading term of a vector (resp. matrix).
For example, if $p\in K[x]^n$ is of the form
\[
  p = p^{(r)}x^r + \dots + p^{(1)}x + p^{(0)},\quad p^{(i)}\in K^n,
\]
then $\deg_x(p)=r$, $\lc(p)=p^{(r)}$, and $\lt(p)=p^{(r)}x^r$.
Let $\{e_1, \ldots, e_n\}$ be the standard basis of~$K^n$.
Then the module $K[x]^n$ viewed as a $K$-vector space is generated by
\[
  \cS := \bigl\{e_ix^j \mathrel{\big|} 1\leq i \leq n,\, j\in \bN\bigr\}.
\]
We define $K[x]_\mu^n:=\{p\in K[x]^n \mid \deg_x(p) \leq \mu\}$; as a $K$-vector
space it is generated by
\[
  \cS_\mu := \bigl\{e_ix^j \mathrel{\big|} 1\leq i \leq n,\, 0\leq j\leq \mu\bigr\}.
\]
Any element $p\in K[x]_\mu^n$ can be expressed in the
basis $\cS_\mu$ as a vector $\vec{p}\in K^{n(\mu+1)}$ (in the following the
decoration~\raisebox{-1pt}{$\vec{}\;$} always indicates such a typecast).

\begin{defi}
Let $N_V$ be the $K$-subspace of $K[x]^n$ generated by
\[
  \bigl\{t \in \cS \mathrel{\big|} t \neq \lt(p) \ \text{for all $p\in \im(\phi_V)$}\bigr\}.
\]
Then $K[x]^n = \im(\phi_V) \oplus N_V$.
We call $N_V$ the \emph{standard complement} of $\im(\phi_V)$.
For any $p\in K[x]^n$, there exist $p_1\in K[x]^n$ and~$p_2\in N_V$ such that
\[\frac{p}{a}V = (p_1V)' + \frac{p_2}{a}V.\]
This decomposition is called the \emph{polynomial reduction} of~$p$
with respect to~$V$.
\end{defi}

\begin{prop}\label{PROP:finite}
Let $a\in K[x]$ and $B\in K[x]^{n \times n}$ be such that $aV'=BV$, as before.
If $\deg_x(B) \leq \deg_x(a)-1$, then $N_V$ is a finite-dimensional
$K$-vector space.
\end{prop}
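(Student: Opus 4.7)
The plan is to show that only finitely many standard generators $e_i\,x^j\in\cS$ can fail to be leading terms of elements of $\im(\phi_V)$. Since $N_V$ is by definition spanned by exactly those generators, this immediately yields $\dim_K N_V<\infty$.

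First I would analyse how $\phi_V$ acts on the leading term. Write $\alpha=\deg_x(a)$ and expand $a=a_\alpha x^\alpha+(\text{lower})$ with $a_\alpha\in K\setminus\{0\}$, and $B=B_{\alpha-1}x^{\alpha-1}+(\text{lower})$ with $B_{\alpha-1}\in K^{n\times n}$ (possibly zero, which is permitted under the hypothesis $\deg_x(B)\leq\alpha-1$). For any row vector $p=p^{(r)}x^r+(\text{lower})\in K[x]^n$ with $\deg_x(p)=r$, a direct computation gives
\[
  \phi_V(p)=ap'+pB=p^{(r)}\bigl(r\,a_\alpha I_n+B_{\alpha-1}\bigr)\,x^{r+\alpha-1}+(\text{lower}),
\]
so $\deg_x(\phi_V(p))\leq r+\alpha-1$ and the leading coefficient of $\phi_V(p)$ is $p^{(r)}\bigl(r\,a_\alpha I_n+B_{\alpha-1}\bigr)$. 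The hypothesis $\deg_x(B)\leq\deg_x(a)-1$ is exactly what is needed so that the product $pB$ cannot raise the degree of $\phi_V(p)$ above $r+\alpha-1$.

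Next I would observe that $\det(r\,a_\alpha I_n+B_{\alpha-1})$, regarded as a polynomial in the variable~$r$, has degree $n$ with leading coefficient $a_\alpha^n\neq 0$, and therefore vanishes for only finitely many values of~$r$. Choose $R\in\bN$ strictly greater than each of its nonnegative integer roots; then for every integer $r\geq R$ the matrix $r\,a_\alpha I_n+B_{\alpha-1}$ is invertible over~$K$. Hence for every $r\geq R$ and every $v\in K^n$ the linear system $p^{(r)}\bigl(r\,a_\alpha I_n+B_{\alpha-1}\bigr)=v$ has a unique solution $p^{(r)}\in K^n$; taking $v=e_i$ and $p=p^{(r)}x^r$ produces an element of $\im(\phi_V)$ whose leading term is exactly $e_i\,x^{r+\alpha-1}$.

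It follows that every monomial $e_i\,x^j$ with $j\geq R+\alpha-1$ is the leading term of some element of $\im(\phi_V)$, so $N_V$ is spanned by a subset of the finite set $\{\,e_i\,x^j\mid 1\leq i\leq n,\ 0\leq j<R+\alpha-1\,\}$ and is therefore finite-dimensional. I do not expect any serious obstacle here: the argument is a straightforward leading-term analysis, with the degree hypothesis entering only to pin down the precise relation $\deg_x(\phi_V(p))=\deg_x(p)+\alpha-1$ for generic~$p$.
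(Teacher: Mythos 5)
Your proof is correct and follows essentially the same leading-term analysis as the paper: the hypothesis $\deg_x(B)\leq\deg_x(a)-1$ gives $\lc(\phi_V(p))=\lc(p)\bigl(r\,\lc(a)I_n+B_{\alpha-1}\bigr)$ for $\deg_x(p)=r$, and invertibility of this matrix for all sufficiently large $r$ shows that every monomial $e_ix^j$ of high enough degree is the leading term of an element of $\im(\phi_V)$, so $N_V$ lies in a space spanned by finitely many monomials. The only difference is cosmetic: you merge the paper's two cases ($\deg_x(B)<\deg_x(a)-1$ and $\deg_x(B)=\deg_x(a)-1$) into one via the determinant polynomial in $r$ (equivalent to the paper's eigenvalue condition on $\lc(B)$), whereas the paper treats them separately and additionally extracts an explicit dimension formula and basis of $N_V$ for later algorithmic use.
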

\begin{proof}
In addition to the proof of the assertion, we also explain how to determine
the dimension and a basis for $N_V$, for later use. For brevity, let
$\mu:=\deg_x(a)-1$. We distinguish two cases.

\smallskip
{\it Case 1.}~
Assume that $\deg_x(B) < \mu$. For any $p\in K[x]^n$ of degree $s>0$, we have
\[
  \lt\bigl(\phi_V(p)\bigr) = s\lc(a)\lc(p)x^{s+\mu}.
\]
Thus all monomials $e_i x^j\in \cS$ with $1\leq i\leq n$ and $j\geq \mu+1$ are not in~$N_V$.
Let $\vec{B}_1, \ldots, \vec{B}_n$ be the columns of~$B$, expressed in the basis $\cS_\mu$.
Let $C(B)$ be the $K$-subspace of $K[x]_\mu^n$ generated by these column vectors.
If $q\in \im(\phi_V)$, then $q = \phi_V(p) = pB$ for some $p \in K^n$, which implies that
$\vec{q}\,$ is a linear combination of $\vec{B}_i$'s. Then $K[x]_\mu^n = C(B) \oplus N_V$.
So $\dim_K(N_V)= (\mu+1)n - \dim_K(C(B))$ and a basis of $N_V$ can be computed by
looking at the echelon form of the matrix $\bigl(\vec{B}_1, \ldots, \vec{B}_n\bigr)$.

\smallskip
{\it Case 2.}~
Assume that $\deg_x(B) =\mu$. For any $p\in K[x]^n$ of degree $s$, we have
\[
  \lt\bigl(\phi_V(p)\bigr) = \lc(p)(s\lc(a)I_n + \lc(B))x^{s+\mu}.
\]
Let $\ell$ be the largest nonnegative integer such that $-\ell \lc(a)$ is an
eigenvalue of $\lc(B)\in K^{n\times n}$. Then for any $s>\ell$,
the matrix $J_s = s\lc(a)I_n + \lc(B)$ is invertible. So any monomial $e_ix^j$ with $j> \ell+\mu$ is not in~$N_V$
for any $i=1, \ldots, n$. Let $p = \sum_{i=1}^n \sum_{j=0}^{\ell} p_{i, j} e_ix^j$.
Then $\phi_V(p)$ belongs to $K[x]_{\ell+\mu}^n$.
In the basis $\cS_{\ell+\mu}$, we can
express $\phi_V(p)$ as a vector of length ${n(\ell+\mu+1)}$ with entries linear in the $p_{i, j}$'s.
This vector can be written in the form $M_{\ell} \vec{P}$,
where $\vec{P} = (p_{1, 0}, p_{2, 0}, \ldots, p_{n, \ell})^T$ and $M_{\ell} \in K^{n(\ell+\mu +1) \times n(\ell+1)}$.
Every $q\in K[x]_{\ell+\mu}^n$ can be expressed as a vector $\vec{q} \in K^{n(\ell + \mu +1)}$.
Then $q\in\im(\phi_v)$ if and only if $\vec{q}\,$ is in the column space of~$M_{\ell}$.
Therefore,
\[K[x]_{\ell+\mu}^n = C({M_{\ell}}) \oplus N_V. \]
This implies that $\dim_K(N_V) = n(\ell+\mu+1) - \rank({M_{\ell}})$, and
a basis of $N_V$ can be computed by
looking at the echelon form of the matrix ${M_{\ell}}$.
\end{proof}

In general, the condition $\deg_x(B) \leq \deg_x(a)-1$ may not hold for an arbitrary basis~$V$ of~$A$.
The following lemma shows that we can perform a simple change of basis to make the condition hold.

\begin{lemma}\label{LM:CB}
Let~$W =\{\omega_1, \ldots, \omega_n\}$ be an integral basis of~$A$ such that it is also normal at infinity. Then
there exist nonnegative integers~$\tau_1, \ldots, \tau_n$ such that
\[ V := \{\nu_1, \ldots, \nu_n\} \quad \text{with $\nu_i = x^{-\tau_i} \omega_i$}\]
is a basis of~$A$ which is normal at $0$ and integral at all other places (including infinity).
\end{lemma}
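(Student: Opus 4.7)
The plan is to define $\nu_i := x^{-\tau_i}\omega_i$, choosing each $\tau_i$ to be the smallest nonnegative integer that makes $\nu_i$ integral at infinity, and then to verify the three required properties of~$V$: integrality at every finite place $a\neq 0$, integrality at infinity, and normality at~$0$.

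For the existence of such a $\tau_i$, I would use normality of~$W$ at infinity. By hypothesis there exist $r_1,\dots,r_n\in K(x)$ such that $\{r_i\omega_i\}_{i=1}^n$ is a local integral basis at infinity; in particular each $r_i\omega_i$ is integral at infinity. Writing $r_i=p_i/q_i$ in lowest terms and setting $s_i:=\deg q_i-\deg p_i$, one obtains a factorisation $r_i=x^{-s_i}u_i$, where $u_i:=r_i\,x^{s_i}$ has numerator and denominator of the same degree and is therefore a unit of~$\bar K(x)_\infty$. Since multiplication by the unit $u_i^{-1}$ does not affect local integrality at infinity, $x^{-s_i}\omega_i$ is still integral at infinity. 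Setting $\tau_i:=\max(0,s_i)$ then gives a nonnegative integer with $x^{-\tau_i}\omega_i$ integral at infinity: if $s_i\geq 0$ this is immediate, and if $s_i<0$ then integrality of $x^{-s_i}\omega_i=x^{|s_i|}\omega_i$ forces $\omega_i$ itself to have a root of order at least $|s_i|$ at infinity, so $\omega_i$ is already integral there.

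With the $\tau_i$ in place, the three required properties of~$V$ are easy to verify. Integrality of each $\nu_i$ at infinity is by construction. At any finite $a\in\bar K$ with $a\neq 0$, the factor $x^{-\tau_i}$ is regular, so $\nu_i=x^{-\tau_i}\omega_i$ inherits integrality at~$a$ from~$\omega_i$. For normality at~$0$, note that $W$, being a global integral basis, is in particular a local integral basis at~$0$; taking the scaling factors $r_i':=x^{\tau_i}\in K(x)$ one has $r_i'\nu_i=\omega_i$, so $\{r_i'\nu_i\}=W$ is a local integral basis at~$0$, as required. That $V$ remains a $K(x)$-basis of~$A$ is clear because each $\nu_i$ is a nonzero $K(x)$-multiple of the basis element~$\omega_i$.

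The main technical point is the factorisation step: one has to convert the general $K(x)$-scalings $r_i$ provided by normality at infinity into scalings by pure powers of~$x$. The key observation is that units of the local ring at infinity do not affect local integrality there, so the unit-part $u_i$ of each $r_i$ can be absorbed without losing the local-basis property, yielding the pure-power scaling $x^{-s_i}$ from which the nonnegative exponent $\tau_i$ is then read off.
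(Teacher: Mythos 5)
Your construction is essentially the one in the paper---factor the normality multipliers $r_i$ at infinity as a power of $x$ times a unit of $\bar K(x)_\infty$, scale $W$ by the resulting powers of $x$, recover normality at $0$ by scaling back with $x^{\tau_i}$ to the global integral basis, and use that $x^{-\tau_i}$ is harmless at finite $a\neq 0$. However, there is a gap in what you establish at infinity. The conclusion of the lemma, as the paper proves it and as it is used in the proof of Theorem~\ref{THM:polyred}, is that $V$ is a \emph{local integral basis} at infinity (this is what licenses applying Lemma~\ref{lemma:degM} to $V$ to get $\deg_x(B)\leq\deg_x(x^\lambda e)-1$); you only show that each individual $\nu_i$ is integral at infinity. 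For a $K(x)$-basis, elementwise integrality at a place does not imply that the basis generates the whole module of elements integral at that place, so the weaker property you verify is not enough for the way the lemma is used.

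Your truncation $\tau_i:=\max(0,s_i)$ is precisely where the stronger property could be lost: if some $s_i$ were negative, replacing $x^{-s_i}\omega_i$ by $\omega_i$ keeps every element integral at infinity but could shrink the module generated there. The repair is short and is exactly what the paper does. First, rescaling the members of a local integral basis by units of $\bar K(x)_\infty$ yields again a local integral basis, so $\{x^{-s_i}\omega_i\}=\{u_i^{-1}(r_i\omega_i)\}$ is a local integral basis at infinity, not merely elementwise integral. Second, $s_i\geq 0$ automatically: each $\omega_i$ is globally integral, hence has no finite poles, so by Chevalley's theorem it is either constant or has a pole at infinity; if $s_i<0$, your own observation that $\omega_i$ would then have a root of order $|s_i|>0$ at infinity (and hence no pole anywhere) would force $\omega_i$ to be the zero element, a contradiction. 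Hence $\tau_i=s_i$ in all cases, the truncation never acts, and $V$ differs from the local integral basis $\{r_i\omega_i\}$ at infinity only by unit factors, so it is itself a local integral basis at infinity, which is the statement actually needed downstream.
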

\begin{proof}
It is clear that such a basis~$V$ will be normal at zero, because multiplying the generators by
the rational functions $x^{\tau_i}$ brings it back to a global integral basis, which is in particular
a local integral basis at zero.
It is also clear that such a basis will be integral at every other point $a\in\bar K\setminus\{0\}$, because the
multipliers $x^{-\tau_i}$ are locally units at such~$a$.
Finally, since the original basis is normal at infinity, there exist rational functions $u_1,\dots,u_n$
such that $\{u_1\omega_1,\dots,u_n\omega_n\}$ is a local integral basis at infinity.
Since $u_i$ can be written as $u_i=x^{-\tau_i}\tilde{u}_i$ with $\tau_i\in\set Z$ and $\tilde{u}_i$ being a unit
in $\bar{C}(x)_\infty$, we see that also $V$ is a local integral basis at infinity.
The integers~$\tau_i$ can only be nonnegative because the $\omega_i$'s have no finite poles and therefore each
of them is either constant or has a pole at infinity by Chevalley's theorem.
\end{proof}

Combining the Hermite reduction and polynomial reduction, we get the following theorem.
\begin{theorem}\label{THM:polyred}
Let $W$ be an integral basis of~$A$ that is normal at infinity.
Let $T := \diag(x^{-\tau_1}, \ldots, x^{-\tau_n}) \in K(x)^{n\times n}$
be such that $V = TW$ is integral at infinity.
Let $e\in K[x]$, $\lambda \in \bN$, and $B, M \in K[x]^{n \times n} $ be such that
$eW' = MW$ and $x^\lambda eV' = BV$.
Then any element $f\in A$ can be decomposed into
\begin{equation}\label{EQ:add}
f = g' + \frac{1}{d} PW + \frac{1}{x^\lambda e} QV,
\end{equation}
where $g\in A$, $d\in K[x]$ is squarefree and $\gcd(d, e)=1$, $P, Q\in K[x]^n$ with $\deg_x(P) < \deg_x(d)$ and $Q\in N_V$, which is
a finite-dimensional $K$-vector space. Moreover, $P, Q$ are zero if and only if $f$ is integrable in~$A$.
\end{theorem}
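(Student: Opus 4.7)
The plan is to chain the two reductions already developed: first Hermite reduction with respect to~$W$, then the polynomial reduction with respect to~$V$, after rewriting the relevant piece in the new basis.

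\medskip

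First I would apply Hermite reduction to $f$ as described in Section~\ref{sec:hermite}, obtaining $f = g_1' + h$ where the remainder has the form $h=\sum_i (h_i/(de))\,\omega_i$ with $d\in K[x]$ squarefree, $\gcd(d,e)=\gcd(h_i,de)=1$. Next, via the extended Euclidean algorithm write $h_i = u_id + v_ie$ with $\deg_x(v_i)<\deg_x(d)$, so that
\[
  h = \sum_{i=1}^n \frac{v_i}{d}\,\omega_i + \sum_{i=1}^n \frac{u_i}{e}\,\omega_i
    = \frac{1}{d}PW + \sum_{i=1}^n \frac{u_i}{e}\,\omega_i,
\]
with $P=(v_1,\dots,v_n)^T\in K[x]^n$ satisfying the required degree bound. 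To handle the second sum, I would switch to the basis~$V$ using $\omega_i = x^{\tau_i}\nu_i$, which gives
\[
  \sum_{i=1}^n \frac{u_i}{e}\,\omega_i = \frac{1}{x^\lambda e}\, \tilde P V,
  \qquad \tilde P:=\bigl(x^{\lambda+\tau_1}u_1,\ldots,x^{\lambda+\tau_n}u_n\bigr)^T.
\]
Now I would invoke the polynomial reduction for $\tilde P$ with respect to~$V$: by definition there exist $p_1\in K[x]^n$ and $Q\in N_V$ such that $(\tilde P/(x^\lambda e))V = (p_1V)' + (Q/(x^\lambda e))V$. Setting $g := g_1 + p_1V$ yields the decomposition~\eqref{EQ:add}. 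The finite-dimensionality of $N_V$ follows from Proposition~\ref{PROP:finite}, whose hypothesis $\deg_x(B)\le\deg_x(x^\lambda e)-1$ I would verify using Lemma~\ref{lemma:degM} applied to the local integral basis~$V$ at infinity (possibly after clearing the $\gcd$ of $x^\lambda e$ with the entries of~$B$).

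\medskip

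For the ``moreover'' part, the implication $P=Q=0\Rightarrow f$ integrable is immediate since then $f=g'$. For the converse, assume $f$ is integrable. Then so is $h=f-g_1'$, and Lemma~\ref{LEM:d} forces $d\in K$. Because $\deg_x(v_i)<\deg_x(d)=0$, we get $v_i=0$ for every~$i$, hence $P=0$. Consequently $(1/(x^\lambda e))QV = f - g' - (p_1V)' - 0$ is also integrable in~$A$, which, by the observation made right after the definition of the polynomial reduction map, means $Q\in\im(\phi_V)$. But $Q\in N_V$ as well, and $K[x]^n = \im(\phi_V)\oplus N_V$, so $Q=0$.

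\medskip

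The bookkeeping around the change of basis is the main obstacle: one must track denominators carefully when passing from $W$ to $V=TW$, ensure that the factor $x^\lambda$ introduced to bring $V'$ into the form $(1/(x^\lambda e))BV$ is absorbed cleanly into the polynomial reduction, and verify the degree hypothesis needed by Proposition~\ref{PROP:finite}. Once these bookkeeping points are settled, the remaining arguments are straightforward combinations of the already-established Lemmas~\ref{LEM:d}, \ref{LM:CB}, and~\ref{lemma:degM}, together with the direct-sum characterization of $N_V$.
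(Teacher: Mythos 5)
Your proposal is correct and follows essentially the same route as the paper's proof: Hermite reduction, the Euclidean splitting of the remainder, the change of basis $V=TW$ with the degree hypothesis of Proposition~\ref{PROP:finite} supplied by Lemma~\ref{lemma:degM}, polynomial reduction, and the same argument via Lemma~\ref{LEM:d}, the degree bound on $P$, and $\im(\phi_V)\cap N_V=\{0\}$ for the ``moreover'' part. The bookkeeping you flag (absorbing $x^\lambda$ and the diagonal factors $x^{\tau_i}$ into $\tilde P=x^\lambda U T^{-1}$) is handled in the paper exactly as you describe, so there is no gap.
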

\begin{proof}
After performing the Hermite reduction on $f$, we get
\[f = \tilde{g}' + \frac{1}{d} PW + \frac{1}{e} UW,\]
where~$P = (v_1, \ldots, v_n)\in K[x]^n$ and $U = (u_1, \ldots, u_n)\in K[x]^n$
with~$u_i, v_i$ introduced in~\eqref{EQ:h}. By Lemma~\ref{LM:CB}, there exists
$T := \diag(x^{-\tau_1}, \ldots, x^{-\tau_n}) \in K(x)^{n\times n}$
such that $V = TW$ is normal at $0$ and integral at any other places (including infinity). Note that we can
choose~$T$ as the identity matrix if~$\deg_x(M)\leq \deg_x(e)-1$.
By taking derivatives, we get
\[V' = \left(T' + T\frac{M}{e}\right)T^{-1}V = \frac{B}{a}V, \]
where $a=x^\lambda e$ for some $\lambda\in \bN$ and $B\in K[x]^{n\times n}$. Since $V$ is locally integral
at infinity, $\deg_x(B) \leq \deg_x(a)-1$ by Lemma~\ref{lemma:degM}.
By expanding in terms of the new basis~$V$, we get
\[\frac{1}{e} UW = \frac{1}{a} \tilde{U}V, \]
where $\tilde{U} = x^\lambda U T^{-1} \in K[x]^n$. Next, we decompose $\tilde{U}$ into
$\tilde{U} = \phi_{V}(\tilde{U}_1) + \tilde{U}_2$ with $\tilde{U}_1, \tilde{U}_2\in K[x]^n$ and
$\tilde{U}_2\in N_V$. Then we get
\[\frac{1}{e} UW = (\tilde U_1 V)' + \frac{1}{a} \tilde U_2 V. \]
We then get the decomposition~\eqref{EQ:add} by setting
$g = \tilde g + \tilde U_1 V$ and~$Q = U_2$.

Assume that $f$ is integrable. Then Lemma~\ref{LEM:d} implies that $d\in K$.
Since $\deg_x(P) < \deg_x(d)$, we have $P=0$. Then
\[\frac{1}{x^\lambda e} QV = \sum_{i=1}^n (a_i \nu_i)'\]
for some $a_i\in K[x]$. So $Q \in \im(\phi_V)$.
Since $\im(\phi_V) \cap N_V = \{0\}$, it follows that $Q=0$.
\end{proof}
The decomposition in~\eqref{EQ:add} is called an \emph{additive decomposition} of~$f$ with respect to~$x$.

\section{Telescoping via reductions: \hskip0ptplus1fill\break second approach} \label{SECT:CT-2}

We now discuss how to compute telescopers for algebraic functions via Hermite reduction and
polynomial reduction.

Let $W, V, e, \lambda, M, B$ be as in Theorem~\ref{THM:polyred}.
To construct a telescoper for~$f\in A$,
we first consider the additive decompositions of the successive derivatives $\partial_t^i\cdot f$ for $i\in \bN$.
Assume that
\[\partial_t\cdot W = \frac{1}{\tilde{e}} \tilde{M}W \quad \text{and}
\quad \partial_t\cdot V = \frac{1}{x^{\tilde{\lambda}} \tilde{e}} \tilde{B}V,\]
where $\tilde e \in K[x]$, $\tilde M, \tilde B\in K[x]^{n\times n}$, and $\tilde \lambda \in \bN$.
Since $\partial_t$ and~$\partial_x$ commute, Proposition 7 in~\cite{chen14a}
implies that $\tilde e \mid e$ and $x^{\tilde{\lambda}} \tilde{e} \mid x^\lambda e$, as polynomials in $K[x]$.
So we can just take $\tilde e = e$ and $\tilde{\lambda}  = \lambda$ by multiplying $\tilde M, \tilde B$ by some factors
of~$x^\lambda e$. A direct calculation yields $\partial_t\cdot f = (\partial_t\cdot g)' + h$,
where
\[h = \left(\partial_t\cdot\frac{P}{d}+\frac{P\tilde M}{de}\right)W + \left(\partial_t\cdot\frac{Q}{x^\lambda e}+ \frac{Q\tilde B}{x^{2\lambda} e^2}\right)V.\]
This implies that the squarefree part of the denominator of $h$ divides $xde$. Applying Hermite reduction and polynomial reduction
to~$h$ yields
\[ h = \tilde g_1' + \frac{1}{d} P_1W + \frac{1}{x^\lambda e} Q_1V,\]
where $P_1, Q_1\in K[x]^n$ with $\deg_x(P_1) < \deg_x(d)$ and $Q_1\in N_V$.
Repeating this discussion, we get the following lemma.
\begin{lemma}\label{LEM:idtf}
For any $i\in \bN$, the derivative $\partial_t^i\cdot f$ has an additive decomposition of the form
\[ \partial_t^i\cdot f = g_i' + \frac{1}{d} P_iW + \frac{1}{x^\lambda e} Q_iV,\]
where $g_i\in A$, $P_i, Q_i\in K[x]^n$ with $\deg_x(P_i) < \deg_x(d)$ and $Q_i\in N_V$.
\end{lemma}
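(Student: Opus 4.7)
The plan is a straightforward induction on $i$; the paragraph immediately preceding the lemma, which handles the step $i=0\rightsquigarrow 1$, furnishes the template for the inductive step.

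\textbf{Base case ($i=0$).} Apply Theorem~\ref{THM:polyred} to $f$, taking $g_0=g$, $P_0=P$, $Q_0=Q$ from that theorem.

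\textbf{Inductive step.} Assume the decomposition holds at stage~$i$. Applying $\partial_t$ to both sides and using that $\partial_t$ commutes with $\partial_x$, I obtain
\[\partial_t^{i+1}\cdot f = (\partial_t\cdot g_i)' + \partial_t\cdot\biggl(\frac{P_i}{d}W+\frac{Q_i}{x^\lambda e}V\biggr).\]
I expand the second summand by the Leibniz rule, using $\partial_t\cdot W=\tilde M W/e$ and $\partial_t\cdot V=\tilde B V/(x^\lambda e)$ --- these hold with the same denominators that appear for $\partial_x$ because we have multiplied $\tilde M$ and $\tilde B$ by appropriate factors, as justified by Proposition~7 of~\cite{chen14a}. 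A direct computation then shows, exactly as in the $i=0$ case carried out before the lemma, that the squarefree part of the denominator of the resulting expression divides $xde$. I now apply Hermite reduction with respect to the integral basis $W$ to isolate a part of the form $\frac{1}{d}P_{i+1}W$ with $\deg_x(P_{i+1})<\deg_x(d)$ (the standard degree bound from Hermite reduction); what remains has squarefree denominator coprime to $d$, hence can be written as $\frac{1}{e}UW$ modulo an integrable term. Changing basis via $V=TW$ turns $\frac{1}{e}UW$ into $\frac{1}{x^\lambda e}\tilde UV$ for some $\tilde U\in K[x]^n$, and one polynomial reduction step produces $Q_{i+1}\in N_V$ with $\frac{1}{x^\lambda e}(\tilde U-Q_{i+1})V$ integrable. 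Collecting all integrable contributions into a single $g_{i+1}\in A$ yields the desired decomposition.

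\textbf{Main obstacle.} The whole argument hinges on the stability of the denominators: that the same pair $(d,\,x^\lambda e)$ can serve at every stage~$i$, with no inflation under repeated $t$-differentiation. This is precisely what the commutativity argument delivers, via the divisibilities $\tilde e\mid e$ and $x^{\tilde\lambda}\tilde e\mid x^\lambda e$, so that the denominators introduced by $\partial_t\cdot W$ and $\partial_t\cdot V$ are already controlled by $e$ and $x^\lambda e$ respectively; and Hermite reduction then keeps the $d$-part squarefree while confining the $P_i$-degree below $\deg_x(d)$. Once this stability is in place, the inductive step is a verbatim replay of the one-step calculation preceding the lemma, and the induction goes through without further surprises.
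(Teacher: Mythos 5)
Your proof is correct and follows essentially the same route as the paper: the paper carries out the $i=0\to1$ step explicitly in the paragraph preceding the lemma (using the commutation of $\partial_t$ and $\partial_x$, the divisibilities $\tilde e\mid e$ and $x^{\tilde\lambda}\tilde e\mid x^\lambda e$ from Proposition~7 of~\cite{chen14a}, and one round of Hermite plus polynomial reduction) and then simply says ``repeating this discussion,'' which is exactly the induction you spell out. No gaps to report.
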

As application of the above lemma, we can compute the minimal telescoper for $f$ by finding the first
linear dependence among the $(P_i, Q_i)$ over~$K$. We also obtain an upper bound for the order of telescopers.
\begin{corollary}
Every $f\in A$ has a telescoper of order at most $n\deg_x(d) + \dim_K(N_V)$.
\end{corollary}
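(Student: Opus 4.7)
The plan is to use a straightforward pigeonhole argument based on Lemma~\ref{LEM:idtf}. The key observation is that each derivative $\partial_t^i\cdot f$ has a remainder $(P_i, Q_i)$ living in a fixed finite-dimensional $K$-vector space, so among sufficiently many derivatives a linear relation must appear, and this relation is precisely a telescoper.

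First I would identify the relevant finite-dimensional ambient space. The condition $\deg_x(P_i)<\deg_x(d)$ means that each $P_i$ belongs to $K[x]_{\deg_x(d)-1}^n$, which has $K$-dimension $n\deg_x(d)$. By Proposition~\ref{PROP:finite} (applicable because the basis $V$ from Theorem~\ref{THM:polyred} satisfies $\deg_x(B)\leq\deg_x(x^\lambda e)-1$), the standard complement $N_V$ has finite $K$-dimension $\dim_K(N_V)$. Hence every pair $(P_i, Q_i)$ lies in the $K$-vector space
\[
U := K[x]_{\deg_x(d)-1}^n \oplus N_V,
\]
whose dimension equals $r := n\deg_x(d)+\dim_K(N_V)$.

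Next I would apply the pigeonhole principle to the $r+1$ remainder pairs $(P_0,Q_0),\dots,(P_r,Q_r)$ coming from $f,\partial_t\cdot f,\dots,\partial_t^r\cdot f$ via Lemma~\ref{LEM:idtf}. They must be $K$-linearly dependent, so there exist $c_0,\dots,c_r\in K$, not all zero, such that $\sum_{i=0}^r c_i P_i=0$ and $\sum_{i=0}^r c_i Q_i=0$. Setting $L:=\sum_{i=0}^r c_i\partial_t^i$ and using $K$-linearity of the decomposition in Lemma~\ref{LEM:idtf}, I obtain
\[
L\cdot f = \sum_{i=0}^r c_i g_i' + \frac{1}{d}\Bigl(\sum_{i=0}^r c_i P_i\Bigr)W + \frac{1}{x^\lambda e}\Bigl(\sum_{i=0}^r c_i Q_i\Bigr)V = \Bigl(\sum_{i=0}^r c_i g_i\Bigr)',
\]
so $L$ is a nonzero operator in $K[\partial_t]$, free of $x$, with $L\cdot f$ integrable in $A$. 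Thus $L$ is a telescoper for $f$ of order at most $r=n\deg_x(d)+\dim_K(N_V)$.

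There is essentially no obstacle here beyond correctly invoking the ingredients already established: Lemma~\ref{LEM:idtf} does the real work of ensuring the remainders have the claimed form, and Proposition~\ref{PROP:finite} supplies the finite-dimensionality of $N_V$. The only subtle point worth noting is that the coefficients $c_i$ produced by linear dependence lie in $K=C(t)$, hence the resulting operator $L$ is indeed free of $x$ as required for a telescoper.
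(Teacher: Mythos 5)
Your proof is correct and is essentially the argument the paper intends: the sentence preceding the corollary ("finding the first linear dependence among the $(P_i,Q_i)$ over $K$") is exactly your pigeonhole argument, with the remainders confined to $K[x]_{\deg_x(d)-1}^n\oplus N_V$ of dimension $n\deg_x(d)+\dim_K(N_V)$ by Lemma~\ref{LEM:idtf} and Proposition~\ref{PROP:finite}. Your handling of the details (the $c_i\in K=C(t)$ are $\partial_x$-constants, so the combination of the $g_i'$ is again a derivative, and the operator is nonzero and free of $x$) matches what the paper leaves implicit.
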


\begin{example}
We continue with Example~\ref{ex:ct}, by applying the polynomial reduction
to the Hermite remainders $h_0,h_1,h_2$. The matrix~$M$ computed before
satisfies the degree condition of Proposition~\ref{PROP:finite}, so no
change of basis is needed. First we compute polynomials
$u_i,v_i\in K[x,y]$ such that for $i=0,1,2$ we have
\[
  h_i = \frac{u_i}{e} + \frac{v_i}{d} = \frac{u_i}{27x^2+54xt+27t^2+4} + \frac{v_i}{x}.
\]
By noting that $\deg_x(u_i)=1$ and $\deg_x(e)=2$, we see that the
map for the polynomial reduction $\phi(p) = ep' + pM$ can only be
applied for $p\in K^n$ so that it turns into $\phi(p) = pM$.
This means that we reduce $xy^2$ using the third row of~$M$ and
$xy$ using its second row. A straightforward calculation reveals
that $h_0$, $h_1$, and $h_2$ all reduce to~$0$. Hence we are left
with finding a $K$-linear combination among the $v_i$:
\begin{align*}
 v_0 &= \frac{-6y^2+9yt-4}{27t^2+4},\\
 v_1 &= \frac{6\bigl(27y^2t-(27t^2-2)y+18t\bigr)}{(27t^2+4)^2},\\
 v_2 &= \frac{6\bigl((96-1539t^2)y^2+(1215t^3-306t)y-1026t^2+64\bigr)}{(27t^2+4)^3}.
\end{align*}
As expected, we obtain the same telescoper as in Example~\ref{ex:ct}.
\end{example}

\section{The D-finite Case}

With algebraic functions being settled, it is natural to wonder about a possible
reduction-based creative telescoping algorithm for D-finite functions. Recall
that in this setting we consider an operator $L\in K(x)[\partial_x]$ instead of a minimal
polynomial $m\in K[x,y]$ and instead of an algebraic field extension
$K(x)[y]/\langle m\rangle$ we consider the $K(x)[\partial_x]$-left-module
$A=K(x)[\partial_x]/\langle L\rangle$. Then the element $1\in A$ is a solution of $L$
because $L\cdot 1=L=0$ in $A$ by construction. If $n=\deg_{\partial_x}L$, then
the general element of $A$ has the form
$f=f_0+f_1\partial_x+\cdots+f_{n-1}\partial_x^{n-1}$ for some
$f_0,\dots,f_{n-1}\in K(x)$. Very much as in the algebraic case, there is a natural way
to associate certain series objects to the elements of~$A$. Based on these
series objects, a notion of integrality was proposed last year~\cite{kauers15b}, and an
algorithm for computing integral bases has been given for so-called Fuchsian
operators~$L$.

It turns out that the Hermite reduction of Section~\ref{sec:hermite} also works in this setting, if we say that
a term $(x-a)^\alpha\log(x)^\beta$ in a generalized series solution is integral if and only if $\alpha\geq0$.
Note that then $\log(x)$ will then be considered integral at zero, despite the singularity of the complex
function at this point. This has the somewhat counterintuitive consequence that $\log(x)$ is integral at
every $a\in\bar K\cup\{\infty\}$ although it does not have a pole anywhere. For algebraic functions,
this is not possible by Chevalley's theorem, and this fact enters in an essential way in the proofs of
Sections~\ref{SECT:CT-1} and~\ref{sec:polynomial}. The lack of Chevalley's theorem is not an artefact of a (possibly
wrong) treatment of logarithmic terms. Because of the Fuchs relation \cite[p.~241]{schlesinger95} there exist operators
$L\in K(x)[\partial_x]$ whose series solutions at any point $a\in\bar K\cup\{\infty\}$ have no logarithmic
terms, only nonnegative exponents, and which are nevertheless not constant.

For the time being, the existence of such operators is a severe obstruction to a possible generalization of
the termination arguments for reduction-based creative telescoping from algebraic functions to Fuchsian D-finite
functions. We hope to explore this topic further in the future.

\section*{Acknowledgements}

We would like to thank Ruyong Feng and Michael F.\ Singer for helpful discussions.

\bibliographystyle{abbrv}
\bibliography{Hermite}

\end{document}